\documentclass[12pt]{article}

\usepackage{amsfonts}
\usepackage{amssymb}
\usepackage{amsmath}

\usepackage{graphicx}
\usepackage{latexsym}
\usepackage{color}
\usepackage{authblk}

\usepackage{graphicx,tikz}
\usepackage{amsmath,amssymb,mathrsfs}

\usepackage{amsthm}

\newcommand{\bm}[1]{\mbox{\boldmath $#1$}}

\def\be{\begin{equation}}
\def\ee{\end{equation}}
\def\bea{\begin{eqnarray}}
\def\eea{\end{eqnarray}}
\def\bean{\begin{eqnarray*}}
\def\eean{\end{eqnarray*}}

\def\square{\hfill\hbox{\vrule\vbox{\hrule\phantom{N}\hrule}\vrule}\,}

\newtheorem{theorem}{Theorem}[section]
\newtheorem{result}{Result}[section]

\newtheorem{defi}{Definition}
{\theoremstyle{definition}
\newtheorem{remark}{Remark}[section]}

\def\proof{\noindent{\em Proof.\/}\hspace{3mm}}

\newlength{\cellwidth}
\begin{document}
\title{New bounds for the area of MOTS and generalized ultra-massive spacetimes}

\author{Jos\'e M. M. Senovilla$^{1,2,3}$}
\affil{$^1$Departamento de F\'isica, Universidad del Pa\'is Vasco UPV/EHU, Apartado 644, 48080 Bilbao, Spain\\
$^2$ EHU Quantum Center, Universidad del Pa\'{\i}s Vasco UPV/EHU.\\
$^3$ Corresponding author: josemm.senovilla@ehu.eus}

{\let\newpage\relax\maketitle}

\vspace{-0.2em}

\begin{abstract}
Bounds for the area of general closed marginally trapped surfaces (MTSs) are presented. They do not require any stability condition, and are determined by a constant that depends on a particular component of the Einstein tensor on the surface and another constant that governs the (in)stability of the MTS. When stability is imposed, the area bounds are refined. These bounds are realized in spacetimes exhibiting interesting generic properties: they possess marginally trapped tubes foliated by marginally trapped topological spheres containing a distinguished round sphere $\bar S$ with constant Gaussian curvature that saturates the area bound. This distinguished surface separates two distinct regions of the marginally trapped tubes: a dynamical horizon and a timelike membrane. The particular case where there is a positive cosmological constant leads to the well-known universal bound $4\pi/ \Lambda$ for spatially stable MTSs, and to the recently introduced `ultra-massive spacetimes'.  These spacetimes are more extreme than black holes, as there is no event horizon and the entire exterior region undergoes unavoidable collapse. In this paper similar behaviour is found for non-positive $\Lambda$ if the energy-momentum content is powerful enough. The results may have implications for binary mergers and on accreting very compact objects.
\end{abstract}

\section{Introduction}
Spatially stable closed marginally future-trapped\footnote{\label{foot1}All the results are applicable, {\em mutatis mutandis}, to marginally past-trapped surfaces, but for the sake of conciseness, only future-trapped surfaces will be considered  from the start.} surfaces (MTSs) are usually regarded as a hallmark of (dynamical) black holes (BHs). The qualifier ``spatially stable'' is crucial, as it guarantees the existence of spacelike perturbations towards the {\em exterior} of the MTS that lead to untrapped surfaces, while there are perturbations towards its interior resulting in (weakly) trapped surfaces \cite{AMS,AMS1,PBH}. The area of MTSs is commonly associated with the size and strength of the corresponding BH, and hence with its entropy, as it is the case in BHs in equilibrium \cite{Bek,Bek1,Haw,AK}.

In this work, a general bound for the area $A_S$ of any MTS $S$, whether stable or not, is presented, given explicitly by
\be\label{lim}
(\mu_S +\lambda_{-\ell}) A_S \leq 4\pi (1-g)
\ee
where $g$ is the genus of $S$, $\lambda_{-\ell}$ is a constant to be defined later associated with the stability of the MTS and the constant $\mu_S$ is defined as
\be\label{muS}
\mu_S = \min_S G_{\mu\nu} \ell^\mu k^\nu .
\ee
Here $G_{\mu\nu}$ is the Einstein tensor and $\ell^\mu, k^\mu$ are the two null normal vector fields orthogonal to $S$ normalized such that 
$$
\ell^\mu k_\mu =-1.
$$
(There remains a boost freedom $l^\mu \rightarrow A\l^\mu$ while $k^\mu \rightarrow k^\mu/A$ that obviously does not affect $\mu_S$). Previous bounds of this type were obtained in \cite{GM,G} for the case of MTSs {\em stable within a given initial hypersurface} that satisfies the dominant energy condition in General Relativity (GR). However, those bounds may fail to be optimal, as they only consider perturbations within the initial slice. As we show below, stronger bounds exist even when the surface is unstable within the initial data set, as explained in Remark \ref{slice}.

It is worth emphasizing that the area bound \eqref{lim} holds in any gravitational theory based on a Lorentzian manifold. In the important case of GR, using the Einstein field equations
\be\label{efe}
G_{\mu\nu}+\Lambda g_{\mu\nu} = \frac{8\pi G}{c^4} T_{\mu\nu}:={\cal T}_{\mu\nu} 
\ee
one has
\be\label{muinT}
\mu_S = \Lambda + \min_S {\cal T}_{\mu\nu} l^\mu k^\nu \hspace{1cm} \mbox{(only in GR)}
\ee
where $\Lambda$ is the cosmological constant.
Notice that the second summand on the righthand side is non-negative if, for instance, the dominant energy condition (DEC) holds.

At first sight, in the important case with $\lambda_{-\ell}=0$ the bound \eqref{lim} seems to lead to a maximum size (or entropy) of topologically spherical ($g=0$) BHs if $\mu_S >0$ ---and also to a minimum one for higher genus black holes if $\mu_S <0$, see for instance \cite{W}. However, as shown in \cite{Snew,Snew2} in the case where $\mu_S=\Lambda >0$
the corresponding area bound\footnote{This bound was found years ago in \cite{HSN} later discussed and improved in \cite{W,Simon}} is not a physical limitation for very large BHs of spherical topology, rather it leads to new types of collapsed objects, different and more powerful than BHs, termed {\em ultra-massive spacetimes}.

A similar situation arises for the general case where $\mu_S$ has any positive value: any dynamical horizon foliated by spherical  MTSs that approach indefinitely the area bound stops being spacelike to become a timelike membrane at the distinguished round sphere $\bar S$, which has constant Gaussian curvature ${\cal K} =\mu_{\bar S}$ and area $A_{\bar S}=4\pi/\mu_S$. In other words, the entire hypersurface foliated by MTSs constitutes a marginally trapped tube (MTT) \cite{AG,Booth,BBGV} that changes signature precisely at $\bar S$. These hypersurfaces are also called {\it generalized holographic screens} (GHS) \cite{BE,BE1,Snew,Snew2} because they satisfy an area law: the area of the foliating MTSs is always non-decreasing following the external directions within the MTT, where the concept {\em external} can be characterized in a precise and unique manner, see \cite{Snew2}. Any other possible MTT intersecting $\bar S$ changes signature somewhere on $\bar S$ too. Typically, a singularity arises, and a {\em generalized ultra-massive spacetime} is left over. They tend not to have any event horizon nor future null  infinity, i.e., they are {\em not black holes}. The key point is that any MTS that reaches the area bound \eqref{lim} with $\lambda_{-\ell}=0$ cannot be stable in any external spatial direction.

The terminology, notation and results of \cite{Snew2} will be repeatedly used, because the underlying ideas and proofs are basically the same. Reference to the theorems and formulas in \cite{Snew2} will also be made for the sake of conciseness and to avoid unnecessary repetitions. The main results are then easily derived. 

But before anything, let us fix the setup and the main concepts to be used.

\section{Definitions and the stability operator}
$(M,g)$ denotes a 4-dimensional oriented Lorentzian manifold with causally oriented metric $g_{\mu\nu}$ of signature $(-,+,+,+)$. 
No field equations are assumed in principle, so that the results can be applied to alternative theories based on Lorentzian geometry.

For the purposes of this work, a closed spacelike surface $S$ is any embedded and orientable 2-dimensional connected and compact manifold with positive-definite first fundamental form $h_{AB}$\footnote{Greek indices are spacetime indices and Latin upper-case indices belong to the 2-dimensional $S$}. $\mathfrak{X}(S)$ and $\mathfrak{X}^\perp(S)$ denote the set of vector fields tangent  and orthogonal to $S$, respectively. One can choose a couple of {\em null} vector fields in $\mathfrak{X}^\perp(S)$, denoted by $\vec \ell$ and $\vec k$, that are linearly independent all over $S$ \cite{Mars}. For convenience, they are chosen to be future pointing and normalized such that
$$
\ell_\mu k^\mu =-1.
$$
The two null second fundamental forms along $\vec \ell$ and $\vec k$ are denoted by $K^\ell_{AB}$ and $K^k_{AB}$ and the corresponding null shears and expansions by $\Sigma^\ell_{AB}$ and $\Sigma^k_{AB}$ and by $\theta^\ell$ and $\theta^k$ respectively. The mean curvature vector field of $S$ reads then

\be\label{H}
\vec H=-\theta^\ell \vec k -\theta^k \vec \ell
\ee

The causal character of $\vec H$ leads to a classification into different types of trapped surfaces, see e.g. \cite{AG,AMS,AMS1,MaSe,S,S0} and in particular \cite{Snew2} for the particular nomenclature used here. For our purposes, we just need to know (recall footnote \ref{foot1}) that $S$ is a marginally outer future trapped surface (MOTS) if $\vec H$ is aligned with one of the null normals all over $S$; and $S$ is said to be marginally future trapped (MTS) if in addition $\vec H$ is future pointing all over $S$.
Equivalently, MOTSs have one of the null expansions vanishing on $S$, and MTSs also have the other null expansion negative. Note that any MTS is also a MOTS, hence all results relative to MOTSs do apply to MTSs too.
To fix the notation, assume that $\vec k$ is the direction of the mean curvature vector, so that in this paper we are going to have $\theta^k =0$ with 
\be\label{H1}
\mbox{MOTS: } \vec H = -\theta^\ell \vec k, \hspace{1cm} \mbox{MTS}: \vec H = -\theta^\ell \vec k \, \, \, \& \, \, \, \theta^\ell <0
\ee
and the region in the normal plane at every $x\in S$ with positive component along $\vec k$ will be called the `external' region. This contains all the $\vec n\in \mathfrak{X}^\perp (S)$ such that $n^\mu k_\mu \geq 0$.

A marginally (outer) trapped tube M(O)TT is a hypersurface in $M$ foliated by M(O)TSs (recently they have also been termed as Quasi-local horizons \cite{AK}). A spacelike MTT is usually called a dynamical horizon \cite{AG,AK} or future outer trapping horizon \cite{Hay}, while if it is timelike is called a timelike  membrane \cite{AG,BBGV} or future inner trapping horizon \cite{Hay}.

\subsection{Stability of marginally outer trapped surfaces}\label{subsec:possibilities}
For the notion of stability of MOTSs one may consult \cite{AMS,AMS1,AK,AG,BeS,BBGV,BE,BE1,Hay,JRD,Mars,Ne,PBH,SPrague,SERE,SILS,Snew2,Simon} and references therein. The stability is defined with respect to a particular external normal direction $\vec n\in \mathfrak{X}^\perp(S)$ ---with $n_\mu k^\mu \geq 0$. One can visualize such directions by looking at  figure 1 in \cite{Snew2}. The precise definition is
\begin{defi}
A MOTS $S$ is stable along an external normal $\vec n\in \mathfrak{X}^\perp(S)$ if there is a non-negative function $f\not\equiv 0$ such that the variation $\delta_{f\vec n} \theta^k $ of $\theta^k$ along $f\vec n$ is non-negative. And $S$ is strictly stable if in addition $\delta_{f\vec n} \theta^k >0$ somewhere.
\end{defi}
The idea behind this definition is: a MOTS embedded in a hypersurface $\Sigma$ is strictly stable along the direction tangent to $\Sigma$ and orthogonal to $S$ if and only if no nearby (weakly) trapped surface reaches the `exterior' of $S$, and no nearby untrapped surface enters its interior, within $\Sigma$ \cite{AMS,AMS1}. This is a barrier property for MTSs that makes them the best candidates to signal local-in-time black-hole regions.

The basic tool needed in this work is the {\em stability operator} for MOTSs, as introduced in \cite{AMS,AMS1}.
The external pointing normal vector fields can be characterized by its norm 
\be
\vec n =-\vec\ell +\frac{n_{\mu}n^{\mu}}{2}\vec k  \label{n}
\ee
keeping its component along $\vec\ell$ fixed, or equivalently
\be
H_\mu n^\mu = -\theta^\ell \hspace{1cm} (k_\mu n^\mu=1) .\label{norm}
\ee
The causal character of $\vec n$ is unrestricted.

The variation of $\theta^k=0$ reads
\be
\delta_{f\vec n} \theta^k:=L_n f
\label{deltatheta}
\ee
where $L_n$ is the {\em stability operator} for $S$ in the direction $\vec n$ as computed in  \cite{AMS,AMS1}
\be
L_n f =-\Delta f+2s^B D_{B}f+f\left({\cal K} -s^B s_{B}+D_{B}s^B-G_{\mu\nu}k^\mu \ell^{\nu}-\frac{n^\rho n_{\rho}}{2}\,  W\right) \label{Ln} .
\ee
Here ${\cal K}$ is the Gaussian curvature on $S$, $\Delta$ its Laplacian, $D_B$ its covariant derivative, $s_{B}$ the one-form on $S$ defined by
$$
\bm{s} (\vec v) =k_{\mu}\nabla_{\vec v}\ell^\mu, \hspace{1cm} \forall \vec v\in \mathfrak{X}(S) 
$$
and finally
\be\label{W}
W:= G_{\mu\nu}k^\mu k^\nu + \Sigma^k_{AB} \Sigma^k{}^{AB} \geq G_{\mu\nu}k^\mu k^\nu .
\ee

For every $\vec n\in\mathfrak{X}^\perp (S)$, $L_n$ is an elliptic operator on $S$, not self-adjoint in general with respect to the $L^2$-product. The principal eigenvalue is {\em real}, denoted by $\lambda_n$, and the corresponding eigenfunction $\phi_n$
\be\label{phin}
L_n \phi_n =\lambda_n \phi_n, \hspace{3mm} \lambda_n \in \mathbb{R}, \hspace{4mm} \phi_n >0
\ee
can be chosen to be {\em real} and positive on all of $S$. 
The fundamental property proven in \cite{AMS,AMS1} is
\begin{result}
A MOTS is (strictly) stable along the external direction $\vec n$ if and only if $\lambda_n$ is non-negative (positive). 
\end{result}

\section{The area bound}\label{sec:bounds}
Substituting $f=\phi_n$ into \eqref{Ln}, dividing by $\phi_n$ and using \eqref{phin} one obtains
\be
\lambda_n = -\Delta \ln \phi_n +D_B s^B-\left(D_B \ln \phi_n -s_B\right)\left(D^B \ln \phi_n -s^B\right)+{\cal K} -\frac{n^\rho n_{\rho}}{2}W-G_{\mu\nu}k^\mu \ell^{\nu}\label{lambdan} .
\ee
Consider the last term on the righthand side and recall the constant $\mu_S$ defined in \eqref{muS}.
Then 
$$
\int_S G_{\mu\nu}k^\mu \ell^{\nu} =\mu_S A_S +E^2 \geq \mu_S A_S 
$$
for some non-negative constant $E^2$, where $A_S$ is the area of $S$. Integrating \eqref{lambdan} on the compact $S$ and using the Gauss-Bonnet theorem
\bea
\left(\mu_S +\lambda_n \right)A_S =
4\pi (1-g) -E^2 -\int_S\left[\left(D_B \ln \phi_n -s_B\right)\left(D^B \ln \phi_n -s^B\right)+ \frac{n^\rho n_{\rho}}{2}W \right]\label{genineq}
\eea
where $g$ is the genus of $S$. Therefore, if $(n^\rho n_\rho) W\geq 0$ on $S$, the integrand on the righthand side is non-negative and one obtains area bounds:
\begin{result}\label{areaGeneral}
Let $S$ be a MOTS and let $\vec n$ be a normal external direction such that $(n^\rho n_\rho)W\geq 0$ on the entire $S$. Then the following inequality holds
\be\label{GenIneq}
(\mu_S +\lambda_n) A_S \leq 4\pi (1-g) .
\ee
Equality holds if and only if in fact $(n^\rho n_\rho)W= 0$ on $S$ and 
\be\label{zeros0}
s_B =D_B \ln \phi_n, \hspace{1cm} G_{\mu\nu} k^\mu \ell^\nu \stackrel{S}{=} \mu_S =  \rm{constant}.
\ee
\end{result}
The proof follows immediately from \eqref{genineq} because $E^2=0$ requires the second equality in \eqref{zeros0} given that $\mu_S$ is the minimum of $G_{\mu\nu}k^\mu \ell^{\nu}$ on $S$.
\begin{remark}
Actually, only the non-negativity of the integral on $S$ of $(n^\rho n_\rho) W$ is required to get the result.
\end{remark}

Result \ref{areaGeneral}  applies to the null normal $\vec n =-\vec\ell$ and one derives the following important result.
\begin{result}\label{areaell}
Let $S$ be a MOTS. 
Then
\begin{enumerate}
\item\label{ineqnew} The following inequality holds
\be\label{ellineq}
\left(\mu_S +\lambda_{-\ell} \right)A_S\leq 4\pi (1-g) 
\ee
with equality if and only if
\be\label{zerosnew}
s_B =D_B \ln \phi_{-\ell}, \hspace{1cm} G_{\mu\nu} k^\mu \ell^\nu \stackrel{S}{=} \mu_S =\rm{constant}.
\ee
\item\label{ineq1new} In particular, if $S$ is stable ($\lambda_{-\ell}\geq 0$) along $-\vec\ell$ 
then we have
$$
\mu_S A_S\leq 4\pi (1-g) 
$$
equality requiring $\lambda_{-\ell}=0$ and \eqref{zerosnew}.
\item\label{topolnew} If $g>0$ and $\mu_S>0$ then necessarily $\lambda_{-\ell}<0$. In other words, if $\mu_S>0$, then stability along $-\vec \ell$ can only happen for topological spheres ($g=0$).
\item\label{Alargernew} If $\mu_S >0$ and the area $A_S > 4\pi /\mu_S$, then $\lambda_{-\ell} <0$ necessarily.
\item\label{Aequalnew} If $\mu_S >0$ and $A_S= 4\pi /\mu_S $, then $S$ can be stable but not strictly stable along $-\vec\ell$. In the stable case (so that $\lambda_{-\ell} = 0$), \eqref{zerosnew} hold and $S$ is a metric round sphere of constant Gaussian curvature ${\cal K}=\mu_S$.
\end{enumerate}
\end{result}

\proof The proof is analogous to that of Result 2.5 in \cite{Snew2}. Point \ref{ineqnew} is Result \ref{areaGeneral} applied to $-\vec\ell$ and point \ref{ineq1new} follows immediately from that. If the surface has genus $g=1,2,\dots$, then the righthand side of \eqref{ellineq} is non-positive giving point \ref{topolnew}. Point \ref{Alargernew} is included in point \ref{topolnew} unless for $g=0$, in which case this follows directly from \eqref{ellineq}. Point \ref{Aequalnew} is again non-trivial only for $g=0$, and then point \ref{ineqnew} implies that $\lambda_{-\ell}$ cannot be positive, the only stable possibility being given by $\lambda_{-\ell}=0$. Point \ref{ineq1new} gives then \eqref{zerosnew}. Finally, one goes back to \eqref{lambdan} applied to $\vec n =-\vec\ell$, which reads
\be
\lambda_{-\ell} = -\Delta \ln \phi_{-\ell} +D_B s^B-\left(D_B \ln \phi_{-\ell} -s_B\right)\left(D^B \ln \phi_{-\ell} -s^B\right)+{\cal K} -G_{\mu\nu}k^\mu \ell^{\nu}\label{lambdaell}
\ee
and inserting here \eqref{zerosnew} all terms except the last two cancel out leading to
$$
{\cal K} -\mu_S =0
$$
proving the result.
 \square
 
\begin{remark}\label{Sunstable}
It is important to remark that bounds for the area do exist even in cases where $S$ is {\em unstable} along $-\vec\ell$, as long as $\mu_S +\lambda_{-\ell} $ is positive. Of course, this requires that $\mu_S >0$.
\end{remark}

\subsection{Consequences under the assumption $W\geq 0$}\label{subsec:W>0}
Note that the function $W$ is non-negative if $G_{\mu\nu}k^\mu k^\nu\geq 0$ ---as it happens in GR if the null convergence condition holds. Even though some of the results can be derived even for negative $W$, from now on we will concentrate in the outstanding case with
\be\label{W>0}
W\geq 0
\ee
everywhere on $S$. Notice that this is still compatible with $G_{\mu\nu}k^\mu k^\nu <0$ as long as $\Sigma^k_{AB} \Sigma^k{}^{AB}\geq - G_{\mu\nu}k^\mu k^\nu$.
Following \cite{Snew2} the following terminology will be used
\be\label{Wset}
{\cal W}(S) := \{q\in S, \hspace{2mm} W|_q =0\}.
\ee

The function $W$ is crucial because, as can be seen from \eqref{Ln}, the stability operators are independent of the direction $\vec n$ if $W\equiv 0$, ergo the stability is ruled in that case  by the operator $L_{-\ell}$ (case with $n^\mu n_\mu =0$). Note that in general
$$
L_n =L_{-\ell} -(n^\mu n_\mu) W/2 .
$$
From here one notices that, under the assumption \eqref{W>0}, whenever $n^\rho n_\rho$ is positive (respectively negative) on $S$ the corresponding eigenvalue $\lambda_n$ will be smaller (resp.\  larger) than $\lambda_{-\ell}$. 
The intuitive idea to keep in mind is that --as long as \eqref{W>0} holds-- stability increases when pointing further `down' with $\vec n$, that is to say, when its norm $n^\mu n_\mu$ is smaller. This was discussed in detail in \cite{AMS,AMS1} and Result 2.4 in \cite{Snew2}. In particular, if \eqref{W>0} holds with $W\not\equiv 0$, a necessary condition for the stability of MOTSs along a {\em spacelike} $\vec n$ is its strict stability along the null direction $-\vec \ell$, that is, $\lambda_{-\ell}> 0$: a MOTS $S$ with $S\backslash {\cal W}(S)\neq \emptyset$ that is stable along a spacelike $\vec n$ must be strictly stable along $-\vec\ell$. 

These properties, together with the fundamental Result \ref{areaell}, allows one to prove the same conclusions as in Result \ref{areaell} for any external normal direction $\vec n$ that is non-timelike everywhere --or non-timelike outside ${\cal W}(S)$ \cite{Snew2}. Nevertheless, these are non-optimal bounds because under \eqref{W>0} one has
$$
\lambda_n \leq \lambda_{-\ell}
$$ 
and thus the optimal area bound is that given in \eqref{ellineq}, in the sense that the bound along the null direction $-\ell$ is stricter (or the same) than for any other choice of external normal vector that is non-timelike outside ${\cal W}(S)$. And similarly for the other points in Result \ref{areaell}. Nevertheless, notice that exactly the same conclusions as in Result \ref{areaell} hold for any normal that coincides with $-\vec\ell$ {\em just outside} ${\cal W}(S)$.

\begin{remark}\label{slice}
As mentioned in the Introduction, bounds using $\mu_S$ were presented in \cite{G,GM} for MOTSs stable within an initial data set, that is to say, embedded in a spacelike hypersurface. However, the restriction of having the MOTS in the initial slice implies that sometimes the area bound will not be optimal, as only the principal eigenvalue $\lambda_n$ for the normal $\vec n$ which is tangent to the initial slice is considered. An extreme case of this failure arises when the MOTS is {\em unstable} within the initial data set, as then the results in \cite{G,GM} do not hold, despite the fact that bounds for the area exist if $\lambda_{-\ell} \geq 0$ ---or even when only $\mu_S+\lambda_{-\ell} >0$.
\end{remark}

\section{The condition $\mu_S>0$ in General Relativity}
All the previous results are independent of any field equations, and they hold in any theory based on a 4-dimensional Lorentzian manifold. Nevertheless, they have a special relevance and applicability in the outstanding case of GR, where the condition $\mu_S>0$ can be put in correspondence with physical quantities such as the cosmological constant $\Lambda$, the energy density and pressure of fluids, or the electromagnetic charges, etc.

To start with, recall that in GR \eqref{efe} and \eqref{muinT} hold and thus $\mu_S >0$ is ensured when $\Lambda >0$ if the DEC holds. This is the case treated in \cite{Snew,Snew2} leading to the ultra-massive spacetimes discussed therein. However, $\mu_S>0$ is still compatible with all signs of $\Lambda$, and thus similar extreme spacetimes may appear in cases with vanishing $\Lambda$, or even with $\Lambda<0$, as long as the MOTS lies within matter with enough strength. To be specific, let me consider several prototypes of energy-momentum tensor separately.

\begin{itemize}
\item The energy-momentum tensor of the electromagnetic field $F_{\mu\nu}$
$$
T^{(em)}_{\mu\nu} = F_{\mu\rho} F_{\nu}{}^\rho -\frac{1}{4} F_{\rho\sigma} F^{\rho\sigma} g_{\mu\nu}= \frac{1}{2} \left(F_{\mu\rho} F_{\nu}{}^\rho+\stackrel{*}{F}_{\mu\rho} \stackrel{*}{F}_{\nu}{}^\rho \right)
$$
is known to satisfy the DEC, and thus ${\cal T}^{(em)}_{\mu\nu}\ell^\mu k^\nu$ is non-negative for sure. Here 
$$
\stackrel{*}{F}_{\mu\nu} :=\frac{1}{2} \eta_{\mu\nu\rho\sigma} F^{\rho\sigma}
$$
is the Hodge dual of $F_{\mu\nu}$ with $\eta_{\mu\nu\rho\sigma}$ the volume element 4-form.
An explicit computation provides
$$
T^{(em)}_{\mu\nu}\ell^\mu k^\nu = \frac{1}{2} \left((F_{\rho\sigma}\ell^\rho k^\sigma)^2 + (\stackrel{*}{F}_{\rho\sigma} \ell^\rho k^\sigma)^2 \right)=\frac{1}{2}  \left((F_{\rho\sigma}\ell^\rho k^\sigma)^2 +(F_{\rho\sigma}\epsilon^{\rho\sigma})^2\right)
$$
where $\epsilon_{\rho\sigma}$ is the volume element 2-form of $S$. These expressions are obviously non-negative. Furthermore, each summand represents the electric and magnetic charge density squared. Therefore, the contribution of any (non-vanishing) electromagnetic field to $\mu_S$ is always non-negative in GR and given by
\be\label{muelec}
\mu_S^{(em)} =\frac{4\pi G}{c^4} \min_S   \left((F_{\rho\sigma}\ell^\rho k^\sigma)^2 +(F_{\rho\sigma}\epsilon^{\rho\sigma})^2\right) >0.
\ee
\item The energy-momentum tensor of a scalar field $\phi$  with potential $V(\phi)$ reads
$$
T^{(scal)}_{\mu\nu} = \nabla_\mu \phi \nabla_\nu \phi -\frac{1}{2} g^{\rho\sigma} \nabla_\rho\phi \nabla_\sigma\phi\, g_{\mu\nu} -V(\phi) g_{\mu\nu}
$$
and this satisfies the DEC whenever $V(\phi) \geq 0$. In principle, the minimum absolute value $V_{\min}$ of $V(\phi)$ can be chosen at will, however, this is related to the problem (and the existence) of a cosmological constant with origin in a scalar field (such as the inflaton or the Higgs fields). Ultimately, the only important thing should be the effective cosmological constant that arises as the sum of $\Lambda$ with $8\pi G V_{min}/c^4$. A straightforward calculation gives
$$
T^{(scal)}_{\mu\nu} \ell^\mu k^\nu = V(\phi)+\frac{1}{2} f_\rho f^{\rho} \geq V(\phi)
$$
where $f_\rho$ is the part of $\nabla_\rho \phi$ tangent to $S$: $f_\mu :=(\delta_\mu^\nu +\ell_\mu k^\nu +\ell^\nu k_\mu) \nabla_\nu \phi$, so that $f_\rho f^\rho \geq 0$. The contribution of any scalar field to $\mu_S$ is then
\be\label{muscal}
\mu_S^{(scal)} =\frac{4\pi G}{c^4}  \min_S  \left( 2V(\phi) + f_\rho f^\rho\right).
\ee
\item The energy momentum tensor of a perfect fluid with velocity vector field $u^\mu$ is
$$
T^{(pf)}_{\mu\nu} = (\rho +p) u_\mu u_\nu +p g_{\mu\nu}
$$
where $\rho$ is the proper energy density and $p$ the isotropic pressure of the perfect fluid. A simple calculation leads to
$$
T^{(pf)}_{\mu\nu} \ell^\mu k^\nu = \frac{1}{2}\left[\rho -p + (\rho+p) v_\rho v^\rho \right] 
$$
where $v_\mu:= (\delta_\mu^\nu +\ell_\mu k^\nu +\ell^\nu k_\mu)u_\nu$ is the part of $u_\mu$ tangent to $S$, and thus $v_\rho v^\rho \geq 0$. We observe that this quantity is always non-negative if $|p|\leq \rho >0$ (this corresponds to the dominant energy condition (DEC) for perfect fluids). Thus, the contribution of a perfect fluid to $\mu_S$ is
\be\label{mupf}
\mu_S^{(pf)} = \frac{4\pi G}{c^4} \min_S \left[\rho -p + (\rho+p) v_\rho v^\rho \right] .
\ee
\item The energy-momentum tensor of a pure radiation field (also called null dust) propagating along the null vector field $L^\mu$  is given by
$$
T^{(pr)}_{\mu\nu} =A L_\mu L_\nu, \hspace{1cm} L_\mu L^\mu =0
$$
for some function $A$. Therefore
$$
T^{pr}_{\mu\nu} \ell^\mu k^\nu = A (L_\mu \ell^\mu) (L_\nu k^\nu)
$$
and this is always non-negative as long as $A>0$. The contribution of a null dust to $\mu_S$ is
\be\label{mupr}
\mu_S^{(pr)} = \frac{8\pi G}{c^4} \min_S \left[A (L_\mu \ell^\mu) (L_\nu k^\nu) \right].
\ee
\end{itemize}
In summary, in GR the value of $\mu_S$ given by \eqref{muinT} has a lower bound related to physical quantities as follows
\be\label{mutotal}
\mu_S \geq  \Lambda + \mu^{(scal)}_S +\mu^{(em)}_S +\mu^{(pf)}_S +\mu^{(pr)}_S +\mu^{(others)}_S
\ee
where $\mu^{(others)}_S$ represents any other possible matter and the other summands are defined in \eqref{muelec}--\eqref{mupr}. Observe that a positive $\Lambda$ sets a universal limit on the area of MOTSs, and this was the prominent case analyzed in \cite{Snew,Snew2}. Notice also that the combination $\Lambda +8\pi G \min V(\phi)/c^4$ arises as an effective cosmological constant if there is a scalar field in the spacetime.

\section{Generalized ultra-massive spacetimes}\label{sec:main}
The results in section \ref{sec:bounds} lead to a generalization of the concept of ultra-massive spacetime introduced in \cite{Snew,Snew2}. In particular it allows for an arbitrary sign of $\Lambda$ in GR.
The general definition is as follows.

\begin{defi}[Generalized ultra-massive spacetimes]\label{ultraM}
A spacetime will be called ``generalized ultra-massive'' if it contains a smooth dynamical horizon foliated by MTSs $\{S_s\}$ ($s\in [0,\tau)$) with the following properties
\begin{enumerate}
\item \label{point1} All the foliating $S_s$  satisfy \eqref{W>0} with $W\not\equiv 0$ (so that $S_s\backslash {\cal W}(S_s)\neq \emptyset$ for all $s\in [0,\tau)$).
\item \label{point2} $\bar\mu >0$, where $\bar \mu :=${\rm inf}$(\mu_{S_s})$, $s\in[0,\tau)$. 
\item \label{point3} For any positive $\epsilon >0$, there exist values of $s\in [0,\tau)$ such that the corresponding foliating MTSs have areas larger than $4\pi/\bar\mu-\epsilon$.
\end{enumerate}
\end{defi}
Observe that in GR, and if the DEC holds $\bar \mu \geq \Lambda$, hence all generalized ultramassive spacetimes are ultramassive in that case (as the original definition requires $\Lambda >0$). However, other cases may occur in alternative theories and for other values of $\Lambda$, depending also on the particular energy-momentum content. 
From point \ref{point2} $\mu_{S_s} >0$ for all $s\in [0,\tau)$, which implies that $S_s$ are topological spheres due to point \ref{topolnew} in Result \ref{areaell} because the foliating MTSs are necessarily stable along the {\em spacelike} direction tangent to the dynamical horizon, and this requires that $S_s$ are strictly stable along $-\vec\ell$, that is, $\lambda_{-\ell}>0$. Assumption \ref{point3}, together with the uniqueness of the foliation by MTSs \cite{AG} and the area law\footnote{The area law states that the total area of the MTSs foliating any MTT
are everywhere non-decreasing along external directions, see Result 2.9 in \cite{Snew2}} for dynamical horizons \cite{Hay,AK,BE,BE1,Snew2}, implies that the areas of the marginally trapped spheres foliating the dynamical horizon approach the bound value $4\pi/\bar\mu$ as much as desired.

\begin{theorem}
Let the spacetime be generalized ultra-massive in the sense of Definition \ref{ultraM} with a continuous Einstein tensor $G_{\mu\nu}$ and assume that the MTSs $\{S_s\}$
converge as $s\rightarrow \tau$ to a smooth closed MOTS that is not extremal and has $W|_{S_\tau}\not\equiv \emptyset$. 
Then
\begin{itemize}
\item The dynamical horizon is part of an MTT ${\cal H}$ that satisfies the area law with topology $\mathbb{R}\times \mathbb{S}^2$.
\item There is a distinguished MTS $\bar S := S_\tau \in {\cal H}$ with constant Gaussian curvature ${\cal K} _{\bar S}=\bar\mu$ ---and area $4\pi /\bar\mu$. 
\item All MTTs passing through $S_0$ (that necessarily weave each other, see \cite{AG,BeS}) change signature somewhere at $\bar S$. At least one such MTT ${\cal H}$ is null everywhere on $\bar S$.
\item If $\bar S$ is not extremal and $\bar S\setminus {\cal W}(\bar S)\neq \emptyset$, ${\cal H}$ becomes timelike towards the immediate past of $\bar S$. 
\end{itemize}
\end{theorem}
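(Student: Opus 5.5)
The proof will follow the template of the analogous theorem in \cite{Snew2}, with $\Lambda$ replaced by $\bar\mu$. The first task is to identify the limit surface $S_\tau$. By the area law for dynamical horizons, together with point \ref{point3} of Definition \ref{ultraM}, the areas $A_{S_s}$ are non-decreasing and tend to $4\pi/\bar\mu$. Smooth convergence then gives $A_{S_\tau}=4\pi/\bar\mu$. Each $S_s$ is stable along the spacelike tangent to the dynamical horizon, and since $W\geq 0$ with $W\not\equiv 0$, this forces $\lambda_{-\ell}(S_s)>0$. The principal eigenvalue depends continuously on the operator, and the operator converges because $G_{\mu\nu}$ is continuous and the surfaces converge smoothly. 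Hence $\lambda_{-\ell}(S_\tau)\geq 0$. Likewise $W|_{S_\tau}\geq 0$ and $\mu_{S_\tau}\geq \bar\mu$, the latter because the minimum of a continuous function varies continuously.

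Next I apply Result \ref{areaell} to $S_\tau$. From $(\mu_{S_\tau}+\lambda_{-\ell})\,4\pi/\bar\mu\leq 4\pi$ (genus zero, since the $S_s$ are spheres and the convergence is smooth), together with $\mu_{S_\tau}\geq\bar\mu$ and $\lambda_{-\ell}\geq0$, I obtain $\mu_{S_\tau}=\bar\mu$ and $\lambda_{-\ell}(S_\tau)=0$ with equality in \eqref{ellineq}. Point \ref{Aequalnew} then gives ${\cal K}=\bar\mu$ on $\bar S:=S_\tau$, which is the second bullet. Equality in \eqref{genineq} with $\vec n=-\vec\ell$ is automatic since $n^\rho n_\rho=0$.

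For the first and third bullets, I will use the hypothesis that $\bar S$ is not extremal, meaning that the principal eigenvalue of the relevant operator is simple and the existence theorem of \cite{AMS,AMS1} for MOTTs applies along a suitable transverse (timelike, e.g.\ $\vec n$ pointing to the past of the $\vec k$ direction) foliation. This extends the tube smoothly through $\bar S$ and yields the MTT ${\cal H}\cong\mathbb{R}\times\mathbb{S}^2$; the area law on ${\cal H}$ follows from Result~2.9 of \cite{Snew2}. The signature of an MTT at a foliating leaf is governed by the lapse solving $L_{-\ell}\,\phi=\tfrac12 (n^\rho n_\rho) W\,\phi$-type equations, as in \cite{AMS1,AG}. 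Concretely, the tangent normal is $\vec n=-\vec\ell+\tfrac{|n|^2}{2}\vec k$ with $L_n f=0$, i.e.\ $L_{-\ell}f=\tfrac{n^\rho n_\rho}{2}Wf$.

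Since $\lambda_{-\ell}(\bar S)=0$, I choose $f=\phi_{-\ell}$ and $n^\rho n_\rho\equiv 0$, which gives a null evolving direction everywhere on $\bar S$; this is the MTT that is null on $\bar S$. For any other MTT through $\bar S$, the function $n^\rho n_\rho$ cannot be of one sign. Integrating against the positive principal eigenfunction of the adjoint operator $L^\dagger_{-\ell}$, whose eigenvalue is also $0$, gives $\int \psi\,(n^\rho n_\rho)Wf=0$, so $n^\rho n_\rho$ changes sign on $\bar S\setminus{\cal W}(\bar S)$ or vanishes there. This proves the change of signature at $\bar S$. That every MTT through $S_0$ reaches $\bar S$ uses the weaving property of \cite{AG,BeS}.

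The last bullet is the main obstacle. On ${\cal H}$ I must show that $\lambda_{-\ell}(S_s)<0$ for leaves just past $\bar S$. Leaves with $\lambda_{-\ell}<0$ cannot be stable along any spacelike or null external direction, and since $W\not\equiv 0$ they must have timelike tangent $\vec n$, so the MTT is timelike there. My plan is a first-order variation of $\lambda_{-\ell}$ along ${\cal H}$ at $\bar S$. Non-extremality, i.e.\ $\delta\lambda\neq0$, together with the area bound \eqref{ellineq} (which forces $\lambda_{-\ell}>0$ on the dynamical-horizon side), implies that $\lambda_{-\ell}$ crosses zero transversally. Hence $\lambda_{-\ell}<0$ beyond $\bar S$, and the time orientation of the leaves there, which lie toward the past along $\vec k$-directions, gives ``timelike towards the immediate past''.
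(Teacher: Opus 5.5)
Your argument for the second bullet is the paper's: area law plus point \ref{point3} of Definition \ref{ultraM}, the bound \eqref{ellineq} with $\lambda_{-\ell}\geq 0$, then point \ref{Aequalnew} of Result \ref{areaell}. Your squeeze $\bar\mu A_{\bar S}\leq(\mu_{\bar S}+\lambda_{-\bar\ell})A_{\bar S}\leq 4\pi$ is in fact a tidier justification of $\mu_{\bar S}=\bar\mu$ than the paper's appeal to continuity, which by itself only gives $\mu_{\bar S}\geq\bar\mu$.

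The gaps are in the other bullets, and they come from your reading of ``not extremal''. In the paper this means the $\ell$-expansion of $\bar S$ is negative, so $\vec H\neq 0$ and $\bar S$ is an MTS rather than an extremal surface (see the parenthetical in the proof). It is not a simplicity statement (principal eigenvalues are always simple), nor a transversality condition $\delta\lambda\neq 0$.

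\textbf{The tube null on $\bar S$ is never actually produced.} For a given MTT, the pair $(f,n^\rho n_\rho)$ is dictated by the tube, not chosen. Exhibiting the solution $(\phi_{-\ell},0)$ of the linearized equation $L_nf=0$ neither constructs an MTT nor shows that it continues the dynamical horizon. A timelike reference foliation in which $\bar S$ is strictly stable would, by the implicit function theorem, give \emph{some} MOTT through $\bar S$. But nothing links it to the $S_s$, unless the foliation is adapted to the DH and smooth up to $\tau$, which presupposes the regularity at $\bar S$ you want. Nor is there any reason for it to be null on $\bar S$.

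The paper instead takes a reference foliation by non-timelike hypersurfaces $\Sigma_s$ adapted to the DH, with each $S_s$ strictly stable in $\Sigma_s$. It then uses the result of \cite{AMS1} for the stable-but-not-strictly-stable case at $s=\tau$. The DH extends as an MTT tangent to $\Sigma_\tau$ along $\bar S$, hence null along $\vec{\bar\ell}$ once $\lambda_{-\bar\ell}=0$ is known. The hypotheses on $S_\tau$ are what make that degenerate-case argument work. For instance, since $\delta_{\vec k}\theta^k=-W$ and the adjoint principal eigenfunction $\psi^*$ is positive, $W|_{\bar S}\not\equiv 0$ supplies the non-degeneracy $\int_{\bar S}\psi^*W\neq 0$.

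\textbf{The last bullet assumes what must be proved.} You take $\lambda_{-\ell}$ to cross zero transversally along ${\cal H}$, but nothing in the hypotheses gives $\delta\lambda\neq 0$. Moreover, suppose $\lambda_{-\ell}(S)<0$ on the leaves beyond $\bar S$. Take a leaf with tangent $f\vec n$, $f>0$, $L_nf=0$, and let $\psi^*>0$ be the principal eigenfunction of $L^\dagger_{-\ell}$ on that leaf. Then you only get
\[
\lambda_{-\ell}\int_S\psi^* f=\int_S\psi^*\,\frac{n^\rho n_\rho}{2}\,Wf<0 ,
\]
so $n^\rho n_\rho<0$ merely \emph{somewhere} on $S\setminus{\cal W}(S)$. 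This rules out a tangent that is non-timelike everywhere, but not leaves of mixed signature, so ``the MTT is timelike there'' does not follow.

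The paper reads the signature change off the tangency of ${\cal H}$ to $\Sigma_\tau$ at $\bar S$ in the adapted foliation, and defers the remaining details to \cite{Snew2}. Your eigenvalue-crossing route would need two further things. First, an actual proof that $\lambda_{-\ell}$ becomes negative past $\bar S$. Second, pointwise rather than averaged control of the sign of $n^\rho n_\rho$.
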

\proof Again the proof is basically the same as that for Theorem 3.1 in \cite{Snew2} with some needed adjustments. One can build a reference foliation composed by non-timelike hypersurfaces $\{\Sigma_s\}$ adapted to the dynamical horizon of Definition \ref{ultraM} in the sense that each $S_s$ is strictly stable within $\Sigma_s$, that is, along the normal $\vec n_s$ of $S_s$ that is tangent to $\Sigma_s$. From the results in \cite{AMS1} such a dynamical horizon extends to $s=\tau$ as an MTT whenever $S_\tau :=\bar S$ is not extremal. Such MTT will be part of ${\cal H}$ and is tangent to $\Sigma_\tau$ everywhere on $\bar S$ if $\bar S$ is just stable (but not strictly stable) there. 

Point \ref{ineqnew} in Result \ref{areaell} implies that 
$$(\mu_{S_s} +\lambda_{-\ell_s})A_{S_s}\leq 4\pi, \hspace{3mm} \forall s\in [0,\tau).
$$
where $\lambda_{-\ell_s}$ denote the principal eigenvalue of each $S_s$ along the corresponding null external direction $-\vec \ell_s$. As the $S_s$ are stable along the spacelike external direction tangent to the dynamical horizon, they have to be strictly stable along $-\vec\ell_s$ as follows from the discussion in subsection \ref{subsec:W>0} and the assumption \ref{point1} in Definition \ref{ultraM}. Thus, $\lambda_{-\ell_s} >0$ for all $s\in [0,\tau)$ and one finds
 $$\mu_{S_s} A_{S_s} <4\pi $$ 
 for all $s\in [0,\tau)$. Using here the definition of $\bar\mu$, one can actually write
 $$
 \bar\mu A_{S_s} < 4 \pi , \hspace{1cm} \forall s\in [0,\tau).
 $$
As the areas increase monotonically with $s$ due to the area law and the convergence to $\bar S=S_\tau$ , it follows that, in the limit $s\rightarrow \tau$,
$$
A_{\bar S} = 4\pi/\bar\mu
$$
otherwise a contradiction with assumption \ref{point3} in Definition \ref{ultraM} would arise. 
Now, one needs to check that $\mu_{\bar S} =\bar\mu$, but this follows from the continuity of the Einstein tensor and of the dynamical horizon. Hence, we are in the situation of point \ref{Aequalnew} in Result \ref{areaell} and we obtain
$$\lambda_{-\bar\ell}=0$$
where $\vec{\bar\ell}$ is the null $\ell$-direction on $\bar S$, the one with negative expansion (so that $\bar S$ is not extremal). Furthermore, $\bar S$ is a round sphere with ${\cal K} =\bar\mu$ and the MTT is tangent to $\Sigma_\tau$ everywhere and null (tangent to $\vec{\bar\ell}$) at $\bar S$. This implies that the MTT changes signature precisely at $\bar S$.

The rest of the argument is as in the proof of Theorem 3.1 in \cite{Snew2}.
\square

The round sphere $\bar S$ has several axial Killing vectors, but the angular momenta relative to these axial vectors all vanish. The proof is identical to that of Remark 3.2 in \cite{Snew2}.

\section{Discussion}\label{sec:discussion}
New area bounds have been presented for general (stable or not) MOTSs. The stability is ruled by the particular constant $\lambda_{-\ell}$ that determines the stability or not of the MOTS towards the past null direction with non-vanishing expansion, together with a particular component of the Einstein tensor on the MOTS. The results are independent of any field equations and of energy conditions. They acquire special relevance in the case of General Relativity. The existence of ultra-massive spacetimes, previously identified in the presence of a positive cosmological constant $\Lambda$  in GR, has been generalized to many cases with any possible value of $\Lambda$. 

Given that the properties of these generalized ultra-massive spacetimes are independent of the existence of symmetries, the results may have implications for mergers of, or accretion onto, very compact objects. For instance, immediately upon the merge of two black holes, the area of the newborn outermost MOTS is strictly larger than the sum of the areas of the two individual previous ones. As the system settles down the MOTS area continues to monotonically non-decrease, and thus an eventual stabilization into a final stationary Kerr black hole depends on the final value of $\mu_S$ (and in particular on $\Lambda$). In the known cases, the ultra-massive spacetimes do not have any event horizon nor future null infinity. Rather, they tend to present universal singularities, whose existence can be deduced from standard incompleteness theorems \cite{HE,P,S1,SMilestone} due to the existence of closed trapped surfaces, see also section 7 in \cite{AMMS}.

In this paper the case genus $g=0$ has been largely analyzed, probably the one with greatest relevance. However, for higher genus one can easily derive {\em lower} bounds for the area of the corresponding MOTS. Similarly, one can explore the relevance of the area bounds for the spherical case when just the combination $\mu_S +\lambda_{-\ell} $ is positive.

Finally, the spacetimes in which the area limits constraint marginally past-trapped surfaces are also interesting and worth studying.

\section*{Acknowledgments}
I am thankful to the group ``A new Geometry for Einstein's Theory of Relativity \& Beyond'' in the Faculty of Mathematics, Vienna University, for hospitality while this research was being finished. Research supported by Grant PID2021-123226NB-I00 funded by the Spanish
MCIN/AEI/10.13039/501100011033 together with “ERDF A way of making Europe” and by Spanish MICINN Project No. PID2021-126217NB-I00.


\begin{thebibliography}{[10]} 
\bibitem{AMMS} Andersson L, Mars M, Metzger J and Simon W, The time evolution of marginally trapped surfaces {\it Class. Quantum Grav.} {\bf 26} (2009) 085018

\bibitem{AMS} Andersson L, Mars M and Simon W 2005 Local existence of dynamical and trapping horizons {\it Phys. Rev. Lett} {\bf 95} 111102
\bibitem{AMS1} Andersson L, Mars M and Simon W 2008 Stability of marginally outer trapped surfaces and existence of marginally outer trapped tubes, {\it Adv. Theor. Math. Phys.} {\bf 12} 853 


\bibitem{AK} Ashtekar A and Krishnan B 2025, Quasi-local black hole horizons: recent advances {\it  Living Rev. Relativ.} {\bf 28} 8.
\bibitem{AG} Ashtekar A and Galloway G J 2005 Some uniqueness results for dynamical horizons {\it Adv. Theor. Math. Phys.} {\bf 9} 1



\bibitem{Bek} Bekenstein JD 1973 Black holes and entropy. {\it Phys. Rev. D} {\bf 7} 2333--2346.
\bibitem{Bek1} Bekenstein JD 1974 Generalized second law of thermodynamics in black hole physics. {\it Phys. Rev. D} {\bf  9} 3292--3300.
\bibitem{BeS} Bengtsson I and Senovilla J M M,  Region with trapped surfaces in spherical symmetry, its core, and their boundaries {\it Phys. Rev. D} {\bf 83} (2011) 044012


\bibitem{Booth} Booth I, Black hole boundaries {\it Can. J. Phys.} {\bf 83} (2005) 1073--1099

\bibitem{BBGV} Booth I, Brits L, Gonzalez J A and Van Den Broeck C, Marginally trapped tubes and dynamical horizons, {\it Class. Quantum Grav.} {\bf 23} (2006) 413

\bibitem{BE} Bousso R and Engelhardt N, New Area Law in General Relativity, {\it Phys. Rev. Lett.} {\bf 115} (2015) 081301

\bibitem{BE1} Bousso R and Engelhardt N, Proof of a new area law in general relativity, {\it Phys. Rev. Lett.} {\bf 92} (2015) 044031


\bibitem{G} Galloway, G.J. 2023 Remarks on the size of apparent horizons. {\it Lett. Math. Phys.} {\bf 113} 118.

\bibitem{GM} Galloway G J and Mendes A, 2018 Rigidity of marginally outer trapped 2-spheres,
{\it Comm. Anal. Geom.} {\bf 26} 63–83.

\bibitem{Haw} Hawking SW 1975 Particle creation by black holes. {\it Commun. Math. Phys.} {\bf 43} 199--220.

\bibitem{HE} S.~W. Hawking and G.~F.~R. Ellis. {\it The Large Scale Structure of Spacetime}. Cambridge University Press, 1973.

\bibitem{Hay} Hayward S A 1994 General laws of black-hole dynamics {\it Phys. Rev. D}, {\bf 49} 6467 

\bibitem{HSN} Hayward S A, Shiromizu T and Nakao K-i, A cosmological constant limits the size of black holes {\it Phys. Rev. D} {\bf 49} 1994 5080--85

\bibitem{JRD} Jaramillo J L, Reiris M and Dain S 2011 Black hole area-angular momentum inequality in non-vacuum spacetimes {\it Phys.Rev. D} {\bf 84} 121503(R) 


\bibitem{Mars} Mars, M. Stability of marginally outer trapped surfaces and applications, in {\em Recent Trends in Lorentzian Geometry}, S\'anchez, M., Ortega, M, and Romero, A. eds., (2012) 111-138 (Springer Proceedings in Mathematics \& Statistics, vol 26. Springer, New York, NY)


\bibitem{MaSe} Mars M and Senovilla J M M , Trapped surfaces and symmetries, {\it Class. Quantum Grav.} {\bf 20} (2003) L293


\bibitem{Ne} Newman R P A C, Topology and stability of marginal 2-surfaces, {\it Class. Quantum Grav.} {\bf 4} (1987) 277


\bibitem{P} Penrose R 1965 Gravitational collapse and space-time singularities {\it Phys. Rev. Lett.} {\bf 14}, 57


\bibitem{PBH} Pook-Kolb D, Booth I and Hennigar R A, Ultimate fate of apparent horizons during a binary black hole merger. II. The vanishing of apparent horizons, {\t Phys. Rev. D} {\bf 104} (2021) 084084 


\bibitem{S1} Senovilla J M M, Singularity theorems and their consequences, {\it Gen. Rel. Grav.} {\bf 30} (1998) 701--748

\bibitem{S0} Senovilla J M M Classification of spacelike surfaces in spacetime {\it Class. Quantum Grav.} {\bf 24} (2007) 3091--3124

\bibitem{S} Senovilla J M M, Trapped surfaces. {\it Int. J. Mod. Phys. D} {\bf  20} (2011) 2139--2168

\bibitem{SPrague} Senovilla J M M, On the stability operator for MOTS and the 'core' of Black Holes, in Proceedings of the Conference ``Relativity and Gravitation. 100 years after Einstein in Prague'', {\it Springer Proc. Phys.} {\bf 157} (2014) 215 (arXiv:1210.3731)

\bibitem{SERE} Senovilla J M M, Remarks on the stability operator for MOTS, in Proceedings of the Spanish Relativity Meeting in Portugal ERE2012, {\it Springer Proc. Math. Stat.} {\bf 60} (2014) 403 (arXiv:1211.6022)

\bibitem{Snew} Senovilla J M M, Ultra-massive spacetimes, Portugalia Math. {\bf 80} (2023) no.1/2, pp.133-155, DOI 10.4171/PM/2095 

\bibitem{Snew2} Senovilla J M M, Beyond black holes: Universal properties of 'ultra-massive' spacetimes, Class. Quantum Grav. {\bf 40}  (2023) 145002

\bibitem{SMilestone} Senovilla J M M and Garfinkle D, The 1965 Penrose singularity theorem, {\it Class. Quantum Grav.} {\bf 32} (2015) 124008


\bibitem{SILS} Shiromizu T, Izumi K, Lee K and Soligon D, Maximum size of black holes in our accelerating Universe,  Phys. Rev. D {\bf 106} (2022)  084014

\bibitem{Simon} Simon W 2012 Bounds on area and charge for marginally trapped surfaces with cosmological constant {\it Class. Quantum Grav} {\bf 29} 062001


\bibitem{W} Woolgar E, Bounded area theorems for higher-genus black holes {\it Class. Quantum Grav.} {\bf 16} (1999) 3005


\end{thebibliography}
\end{document}